\begin{document}
\newcommand{\abs}[1]{\left\vert#1\right\vert}
\newcommand{\set}[1]{\left\{#1\right\}}
\newcommand{\eps}{\varepsilon}
\newcommand{\To}{\rightarrow}
\newcommand{\inv}{^{-1}}
\newcommand{\ihat}{\hat{\imath}}
\newcommand{\var}{\mbox{Var}}
\newcommand{\sd}{\mbox{SD}}
\newcommand{\cov}{\mbox{Cov}}
\newcommand{\f}{\frac}
\newcommand{\fI}[1]{\frac{1}{#1}}
\newcommand{\what}[1]{\widehat{#1}}
\newcommand{\hhat}[1]{\what{\what{#1}}}
\newcommand{\wtilde}[1]{\widetilde{#1}}
\newcommand{\bdot}{\bm{\cdot}}
\newcommand{\Th}{\theta}
\newcommand{\qmq}[1]{\quad\mbox{#1}\quad}
\newcommand{\qm}[1]{\quad\mbox{#1}}
\newcommand{\mq}[1]{\mbox{#1}\quad}
\newcommand{\tr}{\mbox{tr}}
\newcommand{\logit}{\mbox{logit}}
\newcommand{\noi}{\noindent}
\newcommand{\bni}{\bigskip\noindent}
\newcommand{\bul}{$\bullet$ }
\newcommand{\bias}{\mbox{bias}}
\newcommand{\conv}{\mbox{conv}}
\newcommand{\spn}{\mbox{span}}
\newcommand{\colspace}{\mbox{colspace}}
\newcommand{\mC}{\mathcal{C}}
\newcommand{\mF}{\mathcal{F}}
\newcommand{\mH}{\mathcal{H}}
\newcommand{\mI}{\mathcal{I}}
\newcommand{\mL}{\mathcal{L}}
\newcommand{\mM}{\mathcal{M}}
\newcommand{\mP}{\mathcal{P}}
\newcommand{\mR}{\mathcal{R}}
\newcommand{\mS}{\mathcal{S}}
\newcommand{\mT}{\mathcal{T}}
\newcommand{\mX}{\mathcal{X}}
\newcommand{\mY}{\mathcal{Y}}
\newcommand{\bbR}{\mathbb{R}}
\newcommand{\fwer}{\mbox{FWE}}
\newcommand{\fdr}{\mbox{FDR}}
\newcommand{\fnr}{\mbox{FNR}}
\newcommand{\pfdr}{\mbox{pFDR}}
\newcommand{\pfnr}{\mbox{pFNR}}
\newcommand{\mte}{\mbox{MTE}}
\newcommand{\fweI}{\mbox{$k_1$-FWE$_1$}}
\newcommand{\fweII}{\mbox{$k_2$-FWE$_2$}}
\newcommand{\fdp}{\mbox{FDP}}
\newcommand{\fnp}{\mbox{FNP}}
\newcommand{\gfdp}{\mbox{$\gamma_1$-FDP}}
\newcommand{\gfnp}{\mbox{$\gamma_2$-FNP}}
\newcommand{\tfp}{\mbox{\tiny{FP}}}
\newcommand{\tfw}{\mbox{\tiny{FW}}}
\newcommand{\vphi}{\varphi}
\newcommand{\Bern}{\mbox{Bern}}

\newtheorem{theorem}{Theorem}[section]
\newtheorem{corollary}{Corollary}[section]
\newtheorem{conjecture}{Conjecture}[section]
\newtheorem{proposition}{Proposition}[section]
\newtheorem{lemma}{Lemma}[section]
\newtheorem{definition}{Definition}[section]
\newtheorem{example}{Example}[section]
\newtheorem{remark}{Remark}[section]

\title{{\bf\Large Shortest fixed-width confidence intervals for a bounded parameter: The Push algorithm}}

\author{\textsc{Jay Bartroff}\footnote{Corresponding author. Email: \texttt{bartroff@austin.utexas.edu}}\; and \textsc{Asmit Chakraborty}\footnote{Email: \texttt{asmit.chakraborty@utexas.edu}}\\
\small{Department of Statistics \& Data Sciences, University of Texas at Austin, USA}}
\date{}
\maketitle

\abstract{We present a method for computing optimal fixed-width confidence intervals for a single, bounded parameter, extending a method for the binomial due to Asparaouhov and Lorden, who called it the Push algorithm. The method produces the shortest possible non-decreasing confidence interval for a given confidence level, and if the Push interval does not exist for a given width and level, then no such interval exists. The method applies to any bounded parameter that is discrete, or is continuous and has the monotone  likelihood ratio property.  We demonstrate the method on the binomial, hypergeometric, and normal distributions with our available R package. In each of these distributions the proposed method outperforms the standard ones, and in the latter case even improves upon the $z$-interval. We apply the proposed method to World Health Organization (WHO) data on tobacco use.}

\section{Introduction}
In a number of statistical applications, the maximum width of a confidence interval (or its margin of error) resulting from a study is specified \textit{a priori} in guidance or regulation.  Examples include
the Trends in International Mathematics and Science Study (TIMSS) study, a large-scale international assessment of students' mathematics and science knowledge conducted every 4 years in which confidence interval widths are prescribed in terms of student test score points \citep{Siegel24}, and numerous surveys regularly conducted by the U.S.\ Department of Agriculture \citep[see][Appendix~D]{Gearing21,Magness21}, in which confidence interval widths are similarly specified. The most efficient statistical approach to situations such as these is a fixed-width confidence interval.

The study of fixed-width confidence intervals began with Stein’s~\citeyearpar{Stein45} two-stage procedure for the normal mean with unknown variance, which uses a pilot study to estimate the variance and then chooses the total sample size  just large enough to ensure the confidence interval has the pre-determined width.  Since \citet{Stein45} the development of theory for fixed-width intervals has primarily been in the multistage and sequential domain, and few authors consider the finite-sample optimality developed here.  Sequential methods were proposed by \citet{Anscombe53}  for the normal mean, and by \citet{Chow65} who showed their  method achieved  asymptotic coverage and first-order optimality in expected sample size \citep[see also][]{Starr66,Simons68,Mukhopadhyay80,Woodroofe86}. Refinements by \citet{Serfling76} and \citet{Mukhopadhyay96} addressed finite-sample properties and second-order accuracy. A sequential method for the binomial success probability with conservative coverage guarantees was proposed by \cite{Frey10}. The textbooks \citet{Mukhopadhyay94} and \citet{Ghosh11} give comprehensive treatments of these sequential and multistage approaches.

Asparouhov's \citeyearpar{Asparouhov00} PhD thesis includes a computational method for optimal fixed-width confidence intervals for the binomial success probability, which he credited to an unpublished manuscript by the thesis advisor, \citet{Lorden00}.  They called this method ``Push'' because it chooses the lower endpoint of the fixed-width interval to increase as rapidly as possible as a function of the statistic, which they show is both necessary and sufficient for optimality. The method is recursive but generally quick, and therefore useful in applications calling for fixed-width intervals like those above. But the method appears to be essentially unknown to the statistics community due to its unpublished status, lack of software, and also perhaps because \citet{Asparouhov00} describes it in such generality to include sequential sampling; here we focus on the fixed-sample setting.  The goals of this paper are to bring this method method to a wider audience through exposition and demonstration of the method and our available R package \citep{Chakraborty25}, and to generalize it from the binomial to a wider class of distributions including the binomial, hypergeometric, and normal distributions, which we demonstrate below.

Three key ingredients in the Push algorithm are boundedness of the parameter, randomization of the statistic (if discrete), and discretization of the parameter space (if continuous). Although these last two may sound contradictory, randomization of a discrete statistic allows its quantiles  to be uniquely defined, and discretizing a continuous parameter space allows recursive computation of the optimal intervals. The grid chosen to discretize a continuous parameter may be taken to be arbitrarily fine, allowing arbitrary efficiency of the method relative to truly continuous confidence intervals.  

In Section~\ref{sec:gen} we formulate the general setup for the Push intervals and prove their coverage and optimality in Theorem~\ref{thm:main}. In Sections~\ref{sec:binom.intervals}, \ref{sec:hyper}, and \ref{sec:norm.mean} we apply the general method to the binomial, hypergeometric, and normal mean problems, respectively, including simulation studies of their performance and comparisons with the standard methods. In Section~\ref{sec:WHO.dat} we apply the Push intervals to World Health Organization (WHO) data on tobacco use, before concluding in Section~\ref{sec:Concl}.  Throughout  we use $\vee$ and $\wedge$ for maximum and minimum, respectively, and $\llbracket x \rrbracket$ for  the integer closest to $x$, i.e., conventional rounding.

\section{General formulation}\label{sec:gen}

\subsection{Set up and Push algorithm definition}\label{sec:setup}
Let $Y$ be a continuous random variable taking values in $\mY\subseteq\mathbb{R}$ with density from a family~$\{f_\theta:\; \theta\in\Theta\subseteq\mathbb{R}\}$ whose parameter space contains the interval~$[\underline{\theta},\overline{\theta}]\subseteq\Theta$, and whose c.d.f.\ is continuous and strictly increasing.
For chosen $m\ge 1$ we fix the grid  
\begin{equation}\label{gen.grid.def}
\theta_k=\underline{\theta}+(\overline{\theta}-\underline{\theta})k/m,\quad k=0,1,\ldots, m,
\end{equation} 
so that $\theta_0=\underline{\theta}$ and $\theta_m=\overline{\theta}$. The grid size is $(\overline{\theta}-\underline{\theta})/m$ and we will denote the desired width of confidence intervals for $\theta$ by 
\begin{equation}\label{w.def}
w=(\overline{\theta}-\underline{\theta})r/m  
\end{equation} for some $r\in\{1,\ldots,m\}$. If $\theta$ is a continuous parameter, then the $\theta_k$ are used as a discretization of $[\underline{\theta},\overline{\theta}]$ useful in computing the Push algorithm confidence intervals, defined below, and $m$ is chosen for desired accuracy.  If $\theta$ is discrete, equally spaced,  and bounded, then $\underline{\theta}$, $\overline{\theta}$, and $m$ are chosen so that the $\theta_k$ are the parameter values themselves.  For example, for  the hypergeometric distribution considered in Section~\ref{sec:hyper} in which $\theta$ is the number of successes in a population of size $N$, we take $\underline{\theta}=0$ and $\overline{\theta}=m=N$ so that $\theta_k=0,1,\ldots, N$.

Throughout we let $\gamma\in(0,1)$ denote the desired confidence level. We consider intervals~$[L(Y), R(Y)]$ that satisfy 
\begin{equation}\label{rest.conf.lev}
P_\theta(\theta\in [L(Y), R(Y)])\ge \gamma\qmq{for all} \theta\in [\underline{\theta}, \overline{\theta}].
\end{equation}
For discrete $\theta$, we use the set $[\underline{\theta}, \overline{\theta}]$ in \eqref{rest.conf.lev} to denote $\{\underline{\theta}=\theta_0,\theta_1,\ldots,\theta_m=\overline{\theta}\}$. For continuous $\theta$, $[\underline{\theta}, \overline{\theta}]$ denotes the usual interval. In some applications  of the proposed method (such as for the binomial and hypergeometric, below), $[\underline{\theta}, \overline{\theta}]$ will be the entire parameter space and thus \eqref{rest.conf.lev} is the usual definition of the interval~$[L(Y), R(Y)]$ having confidence level~$\gamma$. In settings where the true parameter space is unbounded, such as the normal mean problem considered in Section~\ref{sec:norm.mean}, $[\underline{\theta}, \overline{\theta}]$ is a subset of the parameter space, perhaps motivated by prior information about $\theta$. In that case,  \eqref{rest.conf.lev} corresponds with the \textit{restricted} or \textit{conditional} confidence level which has been considered  in the statistics literature in other contexts \citep[e.g.,][]{Farchione08,Kabaila24,Mandelkern02,Wang08,Zhang03}.   

In considering fixed-width confidence intervals of width~\eqref{w.def}, a simplification is that we consider only $\{\theta_k\}$-valued intervals, and thus of the form 
\begin{equation*}
[L(Y), R(Y)] = [L(Y), L(Y)+w]\qmq{where} w=(\overline{\theta}-\underline{\theta})r/m,
\end{equation*} with $L(y)$ restricted to taking values in $\{\theta_k\}$.  To describe the values $R(y)$ may take we thus extend \eqref{gen.grid.def} beyond $k=m$ by setting $\theta_k= (\underline{\theta}+(\overline{\theta}-\underline{\theta})k/m) \wedge\sup(\Theta)$ for $k>m$. This reduction to grid-constrained intervals is essentially without loss of generality since
\citet{Asparouhov00} points out that the difference between shortest-width continuous and grid-constrained intervals in the continuous-$\theta$ case is at most $2/m$, which can be made arbitrarily small by the choice of $m$. And nothing is lost in the discrete $\theta$ case where $\Theta=\{\theta_0,\ldots, \theta_m\}$. In our numerical examples below we take $m=10^5$, making $2/m$ smaller than is typically recorded in most applications. 

Let $F_k(y)$ denote the c.d.f.\ of $Y$ when $\theta=\theta_k$, $F_k^{-1}(\beta)$ its quantile function for $\beta\in[0,1]$, and extend the domain of $F_k^{-1}$ by setting
\begin{equation}\label{bin.qF.ext}
F_k^{-1}(\beta)=\infty\qmq{for}\beta>1.  
\end{equation}  
The Push algorithm $\gamma$-confidence interval of fixed-width $w=(\overline{\theta}-\underline{\theta})r/m$ is 
\begin{equation}\label{gen.push.def}
[L^*(y),R^*(y)]=[\theta_k,\theta_{k+r}]\qmq{for} y_k\le y<y_{k+1}
\end{equation} where the $y_k$, the \textit{generalized inverses} of $L^*$, are defined slightly differently for whether $\theta$ is continuous or discrete.  In both cases set
\begin{equation}\label{gen.y.init}
 y_{-r}=y_{-r+1}=\ldots =  y_0=\inf(\mY), \quad  y_{m+1}=\infty.
\end{equation} If $\theta$ is continuous, define the remaining $y_k$ recursively as
\begin{equation}\label{gen.recur.cont}
y_k=y_{k-1}\vee F_{k-1}^{-1}(\gamma+F_{k-1}(y_{k-r})) \vee F_{k}^{-1}(\gamma+F_{k}(y_{k-r})),\quad k=1,\ldots, m.
\end{equation} 
If $\theta$ is discrete, define them as
\begin{equation}\label{gen.recur.disc}
y_k=y_{k-1}\vee F_{k-1}^{-1}(\gamma+F_{k-1}(y_{k-r-1})),\quad k=1,\ldots, m.
\end{equation}

The recursions~\eqref{gen.recur.cont} and \eqref{gen.recur.disc} can always be completed to $k=m$, however because of \eqref{bin.qF.ext} it may be that $y_k= \ldots=y_m=\infty$ for some $k\le m$. This happens in the continuous case when
\begin{equation}\label{push.ex.cont}
\gamma+F_{k-1}(y_{k-r})>1\qmq{or} \gamma+F_{k}(y_{k-r})>1,
\end{equation} 
and in the discrete case when 
\begin{equation}
\gamma+F_{k-1}(y_{k-r-1})>1.
\end{equation} When the Push recursion satisfies $y_m<\infty$, we say that the Push interval \textit{exists} for given $\gamma$ and $r$.

\subsection{Coverage and optimality of the Push algorithm}

\begin{theorem}\label{thm:main} Let $Y$ be a continuous random variable with density~$f_\theta(y)$, $\theta\in\Theta \subseteq\mathbb{R}$, whose c.d.f.\ is continuous and strictly increasing, and  either:
\begin{enumerate}
\item[(C)] $\theta$ is continuous with $[\underline{\theta},\overline{\theta}]\subseteq\Theta$ and $f_\theta$ has the monotone likelihood ratio~(MLR) property that $f_{\theta'}(y)/f_{\theta}(y)$ is non-decreasing in $y$ for any $\theta'>\theta$, and whose associated probability measure $P_\theta(A)$ is continuous in $\theta$ for any measurable event~$A$; or
\item[(D)] $\theta$ is discrete with $\Theta$ given by \eqref{gen.grid.def}.  
\end{enumerate}
 If there exists a $\{\theta_k\}$-constrained interval~$[L(Y),R(Y)]$ of fixed-width $(\overline{\theta}-\underline{\theta})r/m$ satisfying \eqref{rest.conf.lev}
whose endpoints are non-decreasing in $Y$, then  the Push algorithm interval~$[L^*(Y),R^*(Y)]$ (given by \eqref{gen.recur.cont} or \eqref{gen.recur.disc}, respectively) exists,  satisfies \eqref{rest.conf.lev}, and $L^*(y)\ge L(y)$ and $R^*(y)\ge R(y)$ for all $y$.
\end{theorem}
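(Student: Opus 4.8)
The plan is to derive all three conclusions from a single monotonicity comparison between the Push thresholds $y_k$ and the ``generalized inverses'' of the assumed competitor $[L,R]$. Writing $w=(\overline\theta-\underline\theta)r/m$ as in \eqref{w.def} and $\wtilde{y}_k=\inf\{y:\ L(y)\ge\theta_k\}$, the hypothesis that $L$ is non-decreasing gives $\{L(y)\ge\theta_k\}=\{y\ge\wtilde{y}_k\}$, with $\wtilde{y}_k$ non-decreasing in $k$ and $\wtilde{y}_k=\inf(\mY)$ for $k\le0$. Because $R=L+w$ takes the value $\theta_{i+r}$ when $L=\theta_i$, for any $\theta\in(\theta_l,\theta_{l+1})$ the covering event is $\{\theta_{l+1-r}\le L(Y)\le\theta_l\}=\{\wtilde{y}_{l+1-r}\le Y<\wtilde{y}_{l+1}\}$, so the competitor's coverage on this cell equals $F_\theta(\wtilde{y}_{l+1})-F_\theta(\wtilde{y}_{l+1-r})$; at a grid point $\theta_j$ one covering interval is added, so the coverage there is at least the adjacent one-sided limits. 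Thus \eqref{rest.conf.lev} reduces to the requirement that $F_\theta(\wtilde{y}_{l+1})-F_\theta(\wtilde{y}_{l+1-r})\ge\gamma$ throughout each open cell.

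First I would prove the dominance $y_k\le\wtilde{y}_k$ for all $k$ by induction, which by \eqref{gen.push.def} is equivalent to $L^*\ge L$, and then $R^*=L^*+w\ge L+w=R$. The base case $k\le0$ holds since both sides equal $\inf(\mY)$. For the inductive step in case~(C) the term $y_{k-1}\le\wtilde{y}_{k-1}\le\wtilde{y}_k$ is immediate, and for the two quantile terms in \eqref{gen.recur.cont} I would exploit the competitor's coverage: letting $\theta\downarrow\theta_{k-1}$ and $\theta\uparrow\theta_k$ in the cell constraint $F_\theta(\wtilde{y}_k)-F_\theta(\wtilde{y}_{k-r})\ge\gamma$ and invoking continuity of $P_\theta$ in $\theta$ gives $F_{k-1}(\wtilde{y}_k)\ge\gamma+F_{k-1}(\wtilde{y}_{k-r})$ and $F_k(\wtilde{y}_k)\ge\gamma+F_k(\wtilde{y}_{k-r})$. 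Combined with the induction hypothesis $y_{k-r}\le\wtilde{y}_{k-r}$ and monotonicity of $F_{k-1},F_k$, applying the increasing quantile functions yields $\wtilde{y}_k\ge F_{k-1}^{-1}(\gamma+F_{k-1}(y_{k-r}))$ and $\wtilde{y}_k\ge F_k^{-1}(\gamma+F_k(y_{k-r}))$, so the maximum defining $y_k$ is $\le\wtilde{y}_k$. Case~(D) is identical but simpler: only coverage at the grid points is required, and the single constraint $F_{k-1}(\wtilde{y}_k)-F_{k-1}(\wtilde{y}_{k-r-1})\ge\gamma$ dominates the lone quantile term in \eqref{gen.recur.disc}. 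Reading the same inequalities as $\gamma+F_{k-1}(y_{k-r})\le F_{k-1}(\wtilde{y}_k)\le1$ and $\gamma+F_k(y_{k-r})\le1$ shows, modulo the convention \eqref{bin.qF.ext}, that the quantile terms are finite, so $y_k<\infty$ whenever $y_{k-1}<\infty$; starting from $y_0=\inf(\mY)$ this propagates to $y_m<\infty$, giving existence of the Push interval.

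It then remains to show that the Push interval itself satisfies \eqref{rest.conf.lev}. By construction the two terms of \eqref{gen.recur.cont} force coverage of the Push interval at both endpoints of every cell: taking $k=l+1$ gives $F_l(y_{l+1})\ge\gamma+F_l(y_{l+1-r})$ and $F_{l+1}(y_{l+1})\ge\gamma+F_{l+1}(y_{l+1-r})$, i.e.\ the Push coverage $F_\theta(y_{l+1})-F_\theta(y_{l+1-r})$ is $\ge\gamma$ in the limits $\theta\downarrow\theta_l$ and $\theta\uparrow\theta_{l+1}$, while coverage at the grid points is automatic since it only adds an interval. The crux---and the step where the MLR hypothesis of case~(C) is essential---is to pass from these two cell-endpoint bounds to coverage at every interior $\theta$. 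For this I would establish the key lemma that, for an MLR family and fixed $a<b$, the interval probability $\theta\mapsto P_\theta(a\le Y\le b)=F_\theta(b)-F_\theta(a)$ is quasi-concave (unimodal) in $\theta$, so that its minimum over any $[\theta_l,\theta_{l+1}]$ is attained at an endpoint; applied with $a=y_{l+1-r}$ and $b=y_{l+1}$ this gives $F_\theta(y_{l+1})-F_\theta(y_{l+1-r})\ge\gamma$ on the whole cell. I expect this lemma to be the main obstacle: MLR only gives the stochastic ordering that $F_\theta(y)$ is non-increasing in $\theta$, whereas quasi-concavity of the difference $F_\theta(b)-F_\theta(a)$ needs the stronger total-positivity (variation-diminishing) structure of $f_\theta(y)$---concretely, that the score $\partial_\theta\log f_\theta(y)$ crosses zero once in $y$ with a crossing point non-decreasing in $\theta$, forcing $\partial_\theta[F_\theta(b)-F_\theta(a)]$ to change sign at most once, from $+$ to $-$. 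In case~(D) no such lemma is needed, since only the finitely many grid points $\theta_j$ must be covered and the recursion \eqref{gen.recur.disc} enforces $F_j(y_{j+1})-F_j(y_{j-r})\ge\gamma$ directly.
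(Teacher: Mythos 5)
Your proposal is correct and follows essentially the same route as the paper: the same generalized inverses of $L$, the same inductive comparison $y_k\le \wtilde{y}_k$ yielding existence and the dominance $L^*\ge L$, $R^*\ge R$, and the same reduction of \eqref{rest.conf.lev} to coverage at the cell endpoints via continuity of $P_\theta$ (the content of the paper's Lemma~\ref{lem:grid.cov}), with the discrete case handled by the single grid-point constraint exactly as in \eqref{gen.recur.disc}. The only point of divergence is your step you rightly flag as the crux---quasi-concavity of $\theta\mapsto P_\theta(a\le Y\le b)$ under MLR---which you sketch from scratch via a score single-crossing argument, whereas the paper simply cites \citet{Karlin56} for the fact that interval probabilities are monotone or unimodal in an MLR family, sidestepping the differentiability and crossing-point bookkeeping your heuristic would need to be made rigorous.
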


The theorem says that if the Push interval does not exist for a given width and desired confidence level, then no such increasing confidence interval exists of that width. For discrete statistics~$X$, the continuous random variable~$Y$ will taken to be a smoothed (or ``randomized'') version, which in the applications to the binomial and hypergeometric below we take to be $Y=X+U$ where $U$ is a uniform random variable independent of $X$.  

In the continuous case~(C), the MLR property implies that the corresponding c.d.f.s are stochastically increasing \citep[][p.~70]{Lehmann05}. This simplifies the check \eqref{push.ex.cont} for the Push intervals to exist, since the second inequality there implies the first.

\begin{proof}[Proof of Theorem~\ref{thm:main}] Let $[L,R]$ be as in the theorem, and let $x_k=\inf\{y:\; L(y)\ge\theta_k\}$ denote the generalized inverses of $L$ so that, if $x_k<y<x_{k+1}$ then $[L(y),R(y)]=[\theta_k,\theta_{k+r}]$. 

First suppose that the continuous case~(C) holds, in which Lemma~\ref{lem:grid.cov} shows that \eqref{rest.conf.lev} is equivalent to the grid coverage condition \eqref{gen.grid.cov}, which we can write as
\begin{equation}\label{grid.cov.L}
 \gamma\le P_\theta(L(Y)\le \theta_{k-1}<\theta_k\le R(Y)) = P_\theta(x_{k-r}< Y<  x_k)\qmq{for}\theta=\theta_{k-1}, \theta_k.   
\end{equation} These two inequalities yield
\begin{equation*}
x_k\ge F_j^{-1}(\gamma+F_j(x_{k-r}))\qmq{for} j=k-1,k,
\end{equation*} thus
\begin{equation}\label{cont.recur.ineq}
x_k\ge x_{k-1}\vee F_{k-1}^{-1}(\gamma+F_{k-1}(x_{k-r})) \vee F_{k}^{-1}(\gamma+F_{k}(x_{k-r})),
\end{equation} since $L$ is increasing. Comparing this with \eqref{gen.recur.cont}, it follows by induction that $x_k\ge y_k$ for all $k$, and thus $L$ existing implies that $\infty>x_m\ge y_m$, which implies that $L^*$ exists. Since \eqref{grid.cov.L}-\eqref{cont.recur.ineq} hold with $y_k$ in place of $x_k$, the grid coverage condition~\eqref{gen.grid.cov} holds for $[L^*,R^*]$, for which \eqref{rest.conf.lev} holds as well by Lemma~\ref{lem:grid.cov}.

If instead the discrete case~(D) holds, with $w$ equal to \eqref{w.def}, the coverage of $[L,R]$  implies that
\begin{equation*}
\gamma\le P_{\theta_{k-1}}(L(Y)\le \theta_{k-1}\le L(Y)+w) = P_{\theta_{k-1}}(x_{k-r-1}< Y< x_k),
\end{equation*} thus 
\begin{equation*}
x_k\ge x_{k-1}\vee F_{k-1}^{-1}(\gamma+F_{k-1}(x_{k-r-1})).
\end{equation*} By \eqref{gen.recur.disc} and an inductive argument similar to the continuous case, we have $x_k\ge y_k$ from which it follows that $[L^*,R^*]$ exists if $[L,R]$ does, and coverage probability~\eqref{rest.conf.lev} holds for the former since these last two inequalities hold with $y_k$ in place of $x_k$.
\end{proof}

The next lemma was the key in the continuous $\theta$ case of the theorem, showing that the coverage probability~\eqref{rest.conf.lev} is equivalent to a grid coverage condition~\eqref{gen.grid.cov} for grid-constrained intervals.

\begin{lemma}\label{lem:grid.cov}
Let $Y$ be as in the continuous $\theta$ case~$(C)$ of Theorem~\ref{thm:main}. Then for any non-decreasing, $\{\theta_k\}$-constrained interval~$[L(Y),R(Y)]$ with $L(y)<R(y)$,  the coverage probability condition~\eqref{rest.conf.lev} holds if and only if
\begin{equation}\label{gen.grid.cov}
P_\theta(L(Y)\le \theta_{k-1}<\theta_k\le R(Y))\ge \gamma\qmq{for}\theta=\theta_{k-1}, \theta_k,\quad k=1,\ldots,m.
\end{equation}
\end{lemma}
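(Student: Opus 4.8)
The plan is to funnel both implications through a single per-subinterval quantity. Following the notation of the theorem's proof, let $x_k=\inf\{y:\ L(y)\ge\theta_k\}$ be the generalized inverses of $L$, and set $g_k(\theta)=F_\theta(x_k)-F_\theta(x_{k-r})=P_\theta(x_{k-r}\le Y<x_k)$. The first step is the observation that, because the interval is $\{\theta_k\}$-constrained of fixed width $w=(\overline{\theta}-\underline{\theta})r/m$ and $Y$ has a continuous c.d.f., for any $\theta$ in the \emph{open} subinterval $(\theta_{k-1},\theta_k)$ the event $\{\theta\in[L(Y),R(Y)]\}$ coincides up to a $P_\theta$-null set with the $\theta$-free event $\{x_{k-r}\le Y<x_k\}$: indeed $L(Y)\le\theta$ forces $L(Y)\le\theta_{k-1}$, i.e.\ $Y<x_k$, while $R(Y)=L(Y)+w\ge\theta$ forces $R(Y)\ge\theta_k$, i.e.\ $L(Y)\ge\theta_{k-r}$ and $Y\ge x_{k-r}$. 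Hence $P_\theta(\theta\in[L(Y),R(Y)])=g_k(\theta)$ throughout the open subinterval, and the grid condition \eqref{gen.grid.cov} for index $k$ is exactly the pair $g_k(\theta_{k-1})\ge\gamma$ and $g_k(\theta_k)\ge\gamma$.

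For the forward direction (coverage $\Rightarrow$ grid) I would let $\theta$ decrease to $\theta_{k-1}$ and increase to $\theta_k$ from inside $(\theta_{k-1},\theta_k)$. There coverage already gives $g_k(\theta)\ge\gamma$; since the associated measure is continuous in $\theta$ under hypothesis~(C), $g_k$ is continuous, so the two one-sided limits yield $g_k(\theta_{k-1})\ge\gamma$ and $g_k(\theta_k)\ge\gamma$, which is \eqref{gen.grid.cov}. This direction uses only grid-constrainedness and continuity of $P_\theta(\cdot)$ in $\theta$, not the MLR property.

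For the reverse direction (grid $\Rightarrow$ coverage), given $\theta\in[\underline{\theta},\overline{\theta}]$ I would locate the subinterval $[\theta_{k-1},\theta_k]$ containing it. If $\theta$ is interior, coverage equals $g_k(\theta)$, so it suffices to deduce $g_k(\theta)\ge\gamma$ from $g_k(\theta_{k-1}),g_k(\theta_k)\ge\gamma$; if $\theta$ is a grid point $\theta_j$, a short computation shows the coverage there dominates $g_k(\theta_j)$ for a suitable $k\in\{j,j+1\}$ (the relevant $Y$-interval being merely widened), and this is $\ge\gamma$ by \eqref{gen.grid.cov}. The crux is therefore the unimodality claim that, for fixed $a<b$, the superlevel set $\{\theta:\ P_\theta(a\le Y<b)\ge\gamma\}$ is an interval, and this is precisely where MLR enters. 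I would prove it via the variation-diminishing property of the kernel $f_\theta(y)$, which is totally positive of order $2$ exactly because of MLR: writing $g_k(\theta)-\gamma=\int(\mathbf 1_{[x_{k-r},x_k)}(y)-\gamma)f_\theta(y)\,dy$, the weight $\mathbf 1_{[x_{k-r},x_k)}(y)-\gamma$ has sign pattern $-,+,-$ in $y$ (using $0<\gamma<1$), so its transform changes sign at most twice in $\theta$ and, if exactly twice, with the same pattern $-,+,-$. Consequently $g_k-\gamma$ cannot realize the pattern $+,-,+$, forcing $\{g_k\ge\gamma\}$ to be convex, whence $g_k(\theta)\ge\min\{g_k(\theta_{k-1}),g_k(\theta_k)\}\ge\gamma$.

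The hard part will be this unimodality step: both invoking the correct total-positivity statement and verifying that the single-crossing structure really forces the superlevel set to be an \emph{interval}, rather than merely controlling $g_k$ at the two endpoints. A self-contained alternative I would keep in reserve is a direct single-crossing argument—for $\theta_1<\theta<\theta_2$ the densities $f_{\theta_1},f_\theta$ and $f_\theta,f_{\theta_2}$ each cross once, and a case analysis on where the crossing falls relative to $[a,b]$ gives $g(\theta)\ge g(\theta_1)$ or $g(\theta)\ge g(\theta_2)$—the delicate case being a crossing strictly inside $[a,b]$, which the variation-diminishing formulation dispatches uniformly.
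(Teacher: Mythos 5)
Your overall architecture coincides with the paper's: the forward direction is the same limiting argument (for $\theta$ in the open cell the coverage event equals a $\theta$-free interval event in $Y$; let $\theta\searrow\theta_{k-1}$ and $\theta\nearrow\theta_k$ using continuity of $P_\theta$ in $\theta$), and the reverse direction reduces, exactly as in the paper, to showing that the interval probability $\pi(\theta)=P_\theta(a<Y<b)$ satisfies $\pi(\theta)\ge\pi(\theta_{k-1})\wedge\pi(\theta_k)$ throughout the cell. The paper disposes of that last step by citing Karlin and Rubin (1956) for the monotone-or-unimodal property of interval probabilities under MLR; you instead attempt a self-contained proof, and that is where your argument has a genuine gap.

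The variation-diminishing theorem does not apply at the order you invoke it. For a TP$_n$ kernel, the sign changes of $\theta\mapsto\int h(y)f_\theta(y)\,dy$ are controlled only for weights $h$ with at most $n-1$ sign changes; your weight $\mathbf{1}_{[x_{k-r},x_k)}(y)-\gamma$ has pattern $-,+,-$ with \emph{two} sign changes, so the theorem requires TP$_3$, while MLR is exactly TP$_2$. This is not a bookkeeping technicality: TP$_2$ alone does not force the conclusion. For example, take piecewise-constant densities on $(0,3)$ whose masses on the three unit cells are $(0.2,\,0.7,\,0.1)$, $(0.185,\,0.65,\,0.165)$, and $(0.1,\,0.7,\,0.2)$ at three parameter values; every pairwise likelihood ratio is non-decreasing in $y$, and joining consecutive densities by normalized geometric interpolation yields a family that is MLR and continuous in $\theta$ with continuous, strictly increasing c.d.f.s, yet the probability of the middle cell runs $0.7,\ 0.65,\ 0.7$ --- an interior local minimum, so the superlevel set $\{\theta:\pi(\theta)\ge\gamma\}$ fails to be an interval for $\gamma$ between $.65$ and $.7$. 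Your fallback single-crossing case analysis breaks at precisely the case you flag as delicate (crossing strictly inside $[a,b]$), and the example shows no TP$_2$-only argument can close it. The repair is either to do as the paper does and quote the monotone-or-unimodal fact, or to observe that the families actually used in the paper's continuous-case applications (the uniformly randomized binomial and the normal location family, i.e., one-parameter exponential families composed with a monotone transformation) are STP$_\infty$, in particular TP$_3$, whereupon your variation-diminishing argument goes through verbatim.
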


\begin{proof}[Proof of Lemma~\ref{lem:grid.cov}]
Assume that \eqref{rest.conf.lev} holds and fix $k$.  For $\theta_{k-1}<\theta<\theta_k$ we have that $$\gamma\le P_\theta(\theta\in [L(Y),R(Y)]) = P_\theta(L(Y)\le \theta_{k-1}<\theta_k\le R(Y)).$$ This last event does not depend  on $\theta$, so \eqref{gen.grid.cov} follows by taking $\theta\searrow\theta_{k-1}$ and $\theta\nearrow\theta_k$ and using continuity of $P_\theta$ in $\theta$.

For the converse, assume that \eqref{gen.grid.cov} holds. Then \eqref{rest.conf.lev} holds for $\theta=\theta_k$ and $\theta_{k-1}$, so it remains to verify for $\theta_{k-1}<\theta<\theta_k$, for which
\begin{equation}\label{gen.cov=grid}
P_\theta(\theta\in [L(Y),R(Y)]) = P_\theta(L(Y)\le \theta_{k-1}<\theta_k\le R(Y))=P_\theta(a_k<Y<b_k),
\end{equation} where $a_k=\inf\{y:\, R(y)\ge \theta_k\}$ and  $b_k=\inf\{y:\, L(y)\ge \theta_k\}$ do not depend on $\theta$.  Let $\pi(\theta)$ denote \eqref{gen.cov=grid} which, being an interval probability, must be either a monotone or unimodal function of $\theta$ by the MLR property \citep{Karlin56}. In either case, throughout the interval $\theta\in[\theta_{k-1}, \theta_k]$
it is bounded from below by the smaller of its values at the interval's endpoints, i.e., $\pi(\theta)\ge \pi(\theta_{k-1})\wedge \pi(\theta_k)\ge \gamma$, completing the proof.
\end{proof}

\section{Binomial confidence intervals} \label{sec:binom.intervals}

Let $S\sim\mbox{Binom}(n,p)$ where the sample size~$n$ is known and the success probability~$p$ is the unknown parameter of interest. To apply the above method for fixed-width confidence intervals for $p$ we take the random variable in Theorem~\ref{thm:main} to be $Y=S+U$ where $U$ is uniformly distributed over the interval $[-1/2,1/2]$, independent of $S$. The discretization \eqref{gen.grid.def} of the parameter space~$\Theta=[\underline{\theta},\overline{\theta}] = [0,1]$ of $p$  is $\{0,1/m,2/m,\ldots, 1\}$ (i.e., $\theta_k=k/m$) and the desired width is denoted by $r/m$. The Push intervals~\eqref{gen.push.def} for $p$ are
\begin{equation}\label{bin.push.def}
[L^*(y),R^*(y)]= \left[\frac{k}{m}, \frac{k+r}{m}\right]\qmq{for} y_k\le y<y_{k+1}
\end{equation} and the  recursion for computing the $y_k$ is \eqref{gen.recur.cont} where $F_k$ is the c.d.f.\ of $Y$ with parameter $p=\theta_k=k/m$, and the initial values in \eqref{gen.y.init} are
\begin{equation*}
 y_{-r}=y_{-r+1}=\ldots =  y_0=-1/2.
\end{equation*} 

In Figure~\ref{fig:binom_intervals}, the binomial Push intervals~$[L^*(y),R^*(y)]$ for $n = 10$, $\gamma = .8$, $w= .318$, and $m = 10^5$ are visually depicted.  We note the intervals' typical asymmetry, truncation at $1$ for larger values of $y$, and slightly curved boundaries between integer values of $y$ due to randomization. These properties are discussed more in Sections~\ref{sec:binom.symm}-\ref{sec:binom.sim}. The R package \citet{Chakraborty25} was used to calculate all Push intervals in this paper, and all figures  were made with \texttt{ggplot2}~\citep{Wickham:2016tn}.


\begin{figure}[h]
  \centering
  \includegraphics[width=0.75\textwidth]{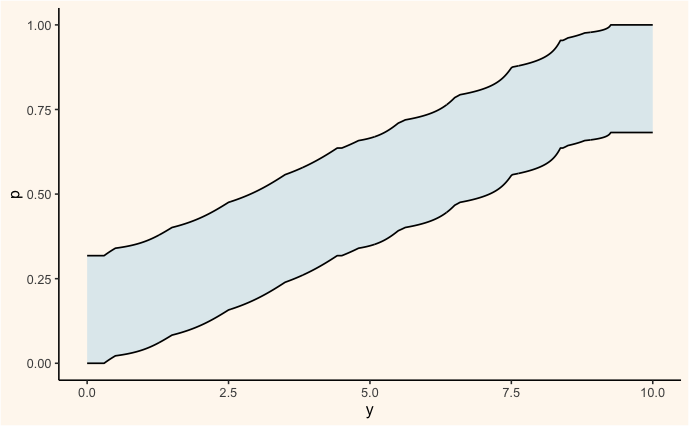}
  \caption{Push intervals~$[L^*(y),R^*(y)]$ (vertical axis, in blue) for binomial $p$ with $n = 10$ trials, nominal confidence level $\gamma = .8$, grid width $m = 10^5$, and width $w=.318$.}
  \label{fig:binom_intervals}
\end{figure}

\subsection{Computational details}
To relate the c.d.f.~$F_k$ of the smoothed random variable~$Y$ utilized in the Push recursion to that of the binomial distribution, let $G_k, g_k$ denote the c.d.f.\ and density, respectively,  of the $\mbox{Binom}(n,p=k/m)$ distribution. Recalling that $\llbracket y\rrbracket$ is $y$ rounded, the c.d.f.~$F_k(y)$ of $Y$ can be written    
\begin{equation}\label{bin.Y.cdf}
F_k(y) = G_k(\llbracket y\rrbracket-1)+g_k(\llbracket y\rrbracket)(y-\llbracket y\rrbracket+1/2)\qmq{for} -1/2\le y \le n+1/2.
\end{equation}
To define the quantile function $F_k^{-1}$, note that $F_k(y)$ is the continuous, piecewise linear function that bisects the ``steps'' of $G_k$.  Given $\beta\in(0,1)$, let  $\underline{\beta}=G_k(s_\beta)$ denote the largest value of $G_k$ that is $\le\beta$.  Then for $g_k(s_\beta+1)> 0$,
\begin{equation}\label{bin.Y.quant}
F_k^{-1}(\beta) = s_\beta+ \frac{\beta-\underline{\beta}}{g_k(s_\beta+1)}+\frac{1}{2}.
\end{equation}

\subsection{Symmetric intervals}\label{sec:binom.symm}
Some confidence intervals~$[L(Y),R(Y)]$ for the binomial parameter $p$ are constructed to be symmetric in the sense of 
\begin{equation}\label{symm.binom}
[L(n-Y),R(n-Y)]=[1-R(Y),1-L(Y)].
\end{equation}
As the example in Figure~\ref{fig:binom_intervals} shows, the Push binomial intervals are in general  \emph{not} symmetric, and Theorem~\ref{thm:main} shows that this asymmetry is necessary for optimality.  Still, some practitioners may require symmetry and it is possible to modify the Push intervals to achieve \eqref{symm.binom} by a ``union with its mirror'' approach which, in general, replaces a binomial confidence interval $[L(Y),R(Y)]$ with
\begin{equation}\label{union.with.mirror}
[L_{sym}(Y), R_{sym}(Y)]= [L(Y) \wedge (1 - R(n - Y)), R(Y) \vee (1 - L(n-Y))],
\end{equation} which satisfies \eqref{symm.binom}. Note that $[L(Y), R(Y)] \subseteq [L_{sym}(Y), R_{sym}(Y)]$ which implies that the confidence level of $[L_{sym}(Y), R_{sym}(Y)]$ is at least as high as that of $[L(Y), R(Y)]$, but \eqref{union.with.mirror} may be wider than $[L(Y), R(Y)]$.  Thus, to apply this to the Push to achieve symmetric $\gamma$-confidence intervals, we recommend the following steps: 
\begin{enumerate}
\item Find the smallest width $w^*=r^*/m$ for which the Push intervals~$[L^*(Y), R^*(Y)]$ exist for confidence level $\gamma$.
\item Obtain the symmetric intervals $[L_{sym}^*(Y), R_{sym}^*(Y)]$ given by \eqref{union.with.mirror}.
\end{enumerate}
The resulting intervals~$[L_{sym}^*(Y), R_{sym}^*(Y)]$ will be symmetric, have confidence level $\gamma$, but may no longer enjoy the optimality of Theorem~\ref{thm:main} because their resulting width may exceed $w^*$. We investigate the achieved widths of these intervals in our numerical simulations in Section~\ref{sec:binom.sim} and find that in many cases the achieved width of $[L_{sym}^*(Y), R_{sym}^*(Y)]$ is only slightly wider than $w^*$ and that these symmetric intervals still outperform competing symmetric intervals.  Our R package includes an option for finding the minimal width~$w^*$ and producing the symmetric intervals Push using the steps above.

\subsection{Use of randomized intervals}
The binomial Push intervals are a function of $Y=S+U$, and thus may be viewed as a randomized version of intervals based on the binomial statistic~$S$. Although this randomization is necessary for the optimality of Theorem~\ref{thm:main}, in practice one may wish to compute the intervals from the data $S$ alone.  For this the statistician may choose to simply report the $U=0$ version of the intervals, but better choices are to randomize using $S+U$, or to report all the intervals (and their weights) that result from $Y=S+u$ for $u\in [-1/2,1/2]$.  The R package makes this possible by returning a \emph{function}~$y\mapsto [L^*(y), R^*(y)]$ which can be evaluated to achieve any of these options.

\subsection{Simulation examples and comparisons}\label{sec:binom.sim}
Here we demonstrate and compare the binomial Push intervals with competitors in terms of coverage probability and achieved width through numerical simulations.  We compare with the standard fixed-width~$w$ interval for $p$ based on $S\sim\mbox{Binom}(n,p)$ which has endpoints 
\begin{equation}\label{bin.std.int}
S/n\pm w/2.    
\end{equation} 
Although exact and even length-optimal intervals exist for $p$ \cite[see][]{Schilling14},  the minimum coverage probability  of those intervals occurs for $S/n$ near $1/2$ where the intervals are widest, so their fixed-width form is \eqref{bin.std.int} with $w$ taken to be their maximum width.

We note that the right endpoints of \eqref{bin.push.def} will exceed $1$ when $k>m-r$.  In practice, these or any fixed-width~$w$ intervals~$[L(Y),R(Y)]$ with $R(y)\ge w$ can be constrained to $[0,1]$ by replacing them by 
\begin{equation}\label{bin.push.cons}
[L'(Y), R'(Y)]=[(R(Y)\wedge 1)-w, (R(Y)\wedge 1)]   
\end{equation} which only possibly increases the coverage probability, thus does not decrease the confidence level.  This is the approach taken to the standard and Push intervals here, and  the default behavior in our R package. 

We focus on scenarios with relatively small sample sizes $n = 10, 20$ where other intervals based on the normal approximation are known to fail \citep{Brown01,Brown02}. Using resolution $m = 10^5$, we first obtain the minimum widths $w^*$ at which the Push intervals exist at confidence level~$\gamma = .90$ and $.95$. Then, for $k = 1, \dots, m$, we generate $2000$ data sets with true success probability $p = p_k$ under each $n$ from which we estimate the average coverage probabilities and their standard errors of the Push and standard intervals with the same width~$w^*$, which are plotted in Figures~\ref{fig:binom_avg_results} and \ref{fig:binom_avg_results_sym}. For comparison, Figure~\ref{fig:binom_avg_results} includes the unconstrained version of Push (i.e, the version without the modification~\eqref{bin.push.cons}) and Figure~\ref{fig:binom_avg_results_sym} includes the symmetric version of Push, as described in Section~\ref{sec:binom.symm}. Because the number $m = 10^5$  of $p_k$ is very dense, in these figures we plotted the representative subset of points $p \in \{0, .01, \dots, .99, 1\}$ to maintain visual clarity. 

\begin{figure}[h]
  \centering
  \includegraphics[width=0.75\textwidth]{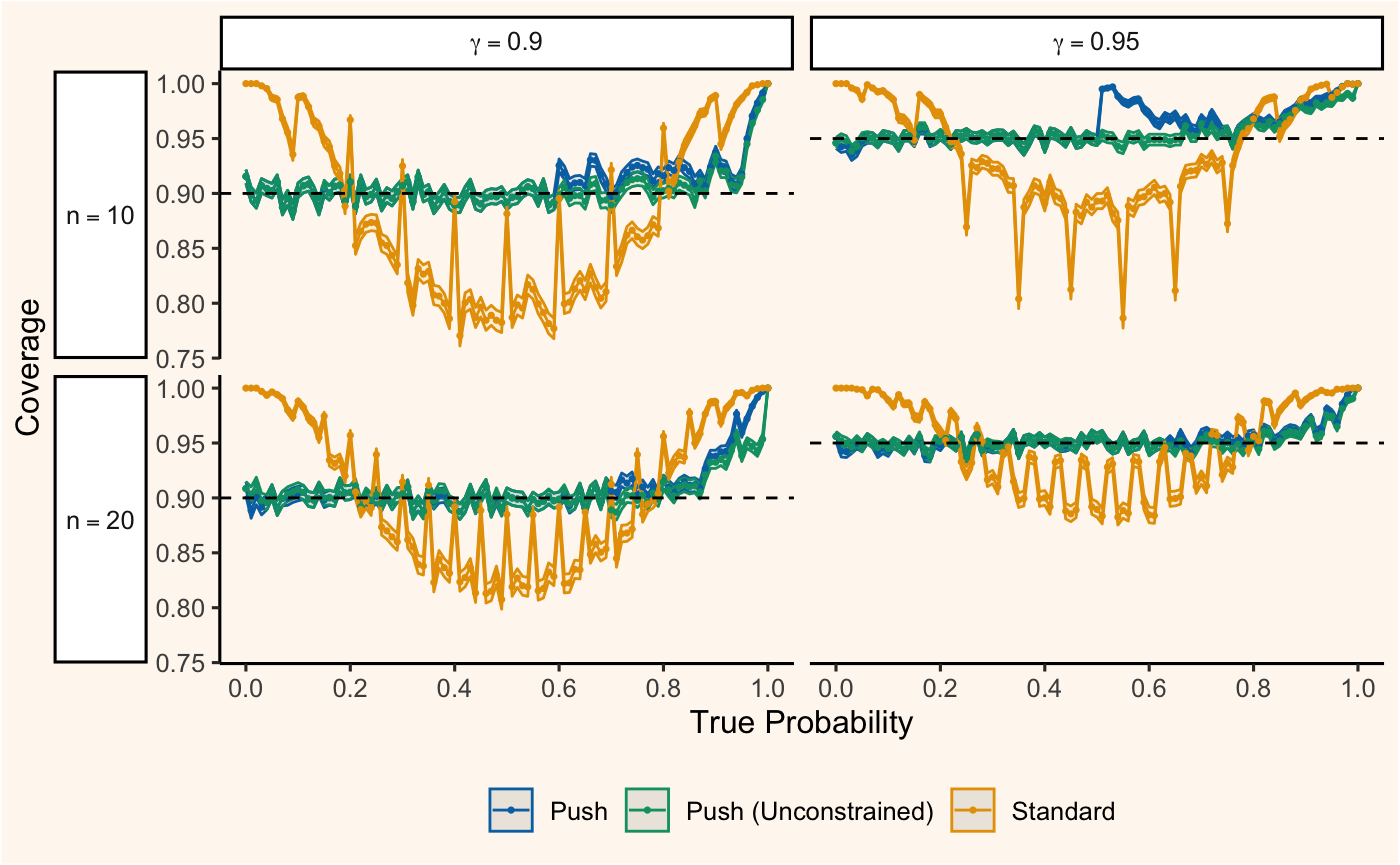}
  \caption{Coverage probability of intervals for binomial~$p$ with $n =10, 20$ trials and nominal confidence levels~$\gamma=.9, .95$.}
  \label{fig:binom_avg_results}
\end{figure}

\begin{figure}[h]
  \centering
  \includegraphics[width=0.75\textwidth]{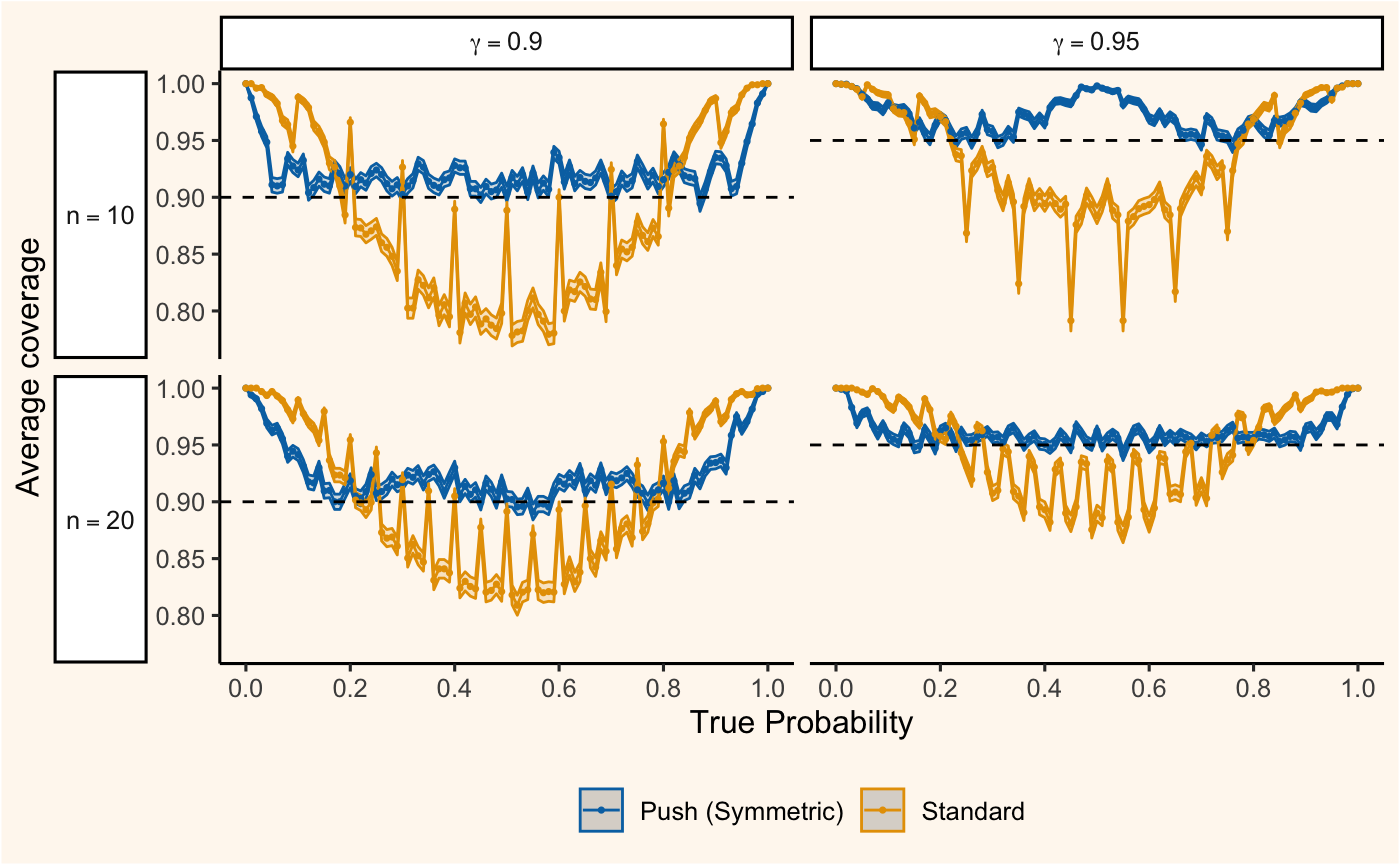}
  \caption{Coverage probability of symmetric intervals for binomial~$p$ with $n =10, 20$ trials and nominal confidence levels~$\gamma=.9, .95$.}
  \label{fig:binom_avg_results_sym}
\end{figure}

Figures~\ref{fig:binom_avg_results} and \ref{fig:binom_avg_results_sym} show the coverage probability as a function of the parameter~$p$.  To also investigate the minimum coverage probability of these methods as a function of achieved width, Figure~\ref{fig:binom_min_results} shows the minimum coverage probability achieved by the Push intervals, the standard intervals of the same width,  and the symmetric version of Push, for the $n=10$ case. For this, we again found the minimum width for which the Push intervals exist at level $\gamma=.7, .8, .9, .95$ which are the widths shown on the $x$-axis of Figure~\ref{fig:binom_min_results}. The asymmetric Push intervals have  minimum coverage probability no smaller than $\gamma$ by Theorem~\ref{thm:main}; the actual minimum coverage probability is slightly larger than, but not visually distinguishable from, $\gamma$ so for this reason the value $\gamma$ is plotted for the asymmetric Push intervals in Figure~\ref{fig:binom_min_results}. The minimum coverage probability of the symmetric Push intervals was obtained by taking the minimum average over the grid of $p$ values as described in the previous paragraph, and for the standard intervals this was done at the union of achieved widths for the asymmetric and symmetric Push intervals. For the Push and standard intervals, the minimum coverage generally occurs for $p$ near $1/2$ and thus the constrained and unconstrained algorithms tend to have the same minimum coverage. Hence, we omit the latter for both methods in the figure.

\begin{figure}[h]
  \centering
  \includegraphics[width=0.75\textwidth]{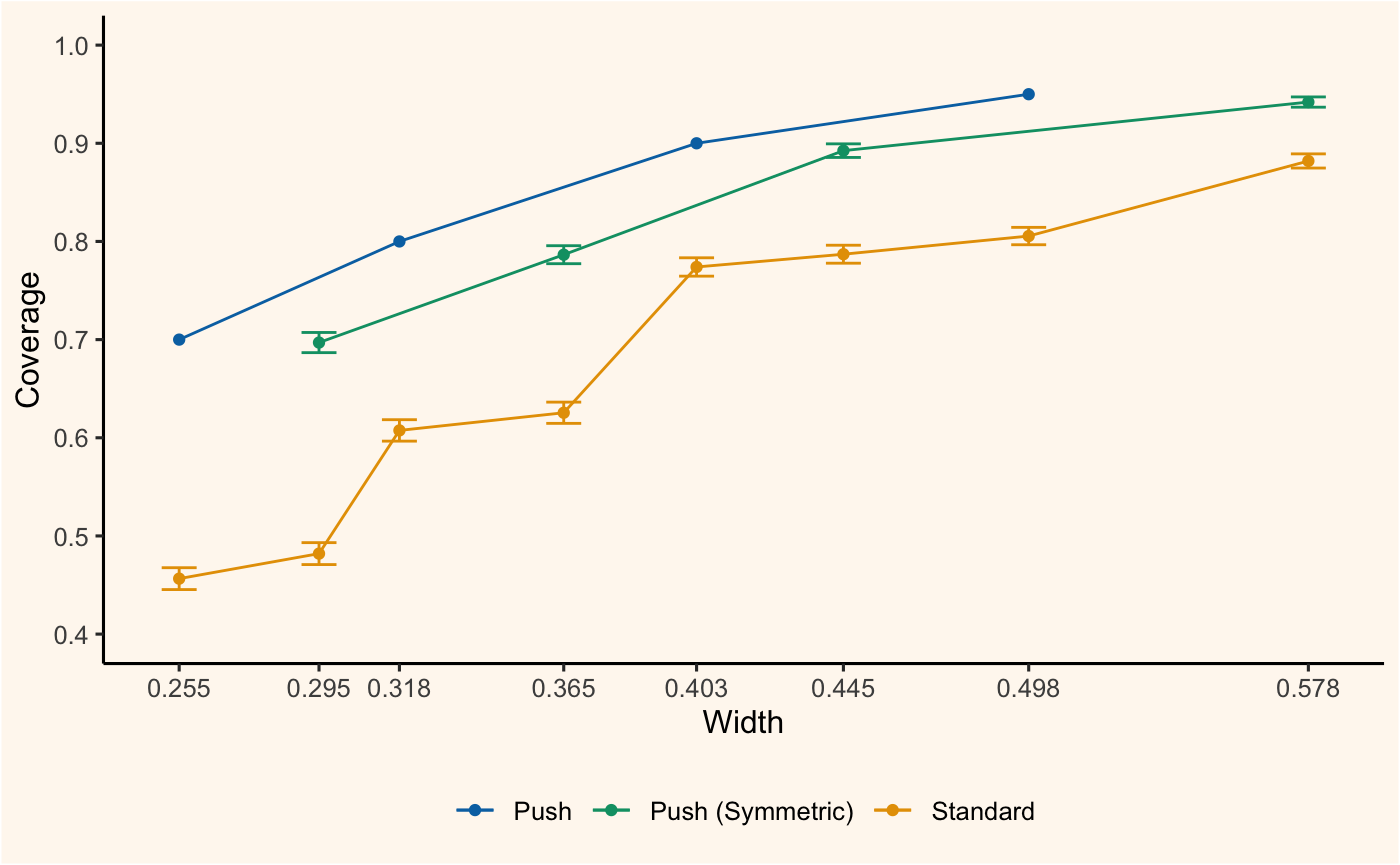}
  \caption{Minimum over $p\in[0,1]$ of coverage probability of binomial Push and standard intervals as a function of their achieved width.}
  \label{fig:binom_min_results}
\end{figure}

In Figures~\ref{fig:binom_avg_results} and \ref{fig:binom_avg_results_sym} the coverage probability of the standard intervals of the same width as the Push intervals  falls well below the nominal level~$\gamma$ achieved by Push intervals for $p$ near the center of $[0,1]$, and approaches $1$ for $p$ near the endpoints of this interval. The coverage probability of the unconstrained Push intervals in Figure~\ref{fig:binom_avg_results} is centered at or just above $\gamma$, except for $p$ near $1$ where it rises due to the ``spillover'' of the right Push endpoints greater than $1$. The coverage probability of the $[0,1]$-constrained Push intervals in Figure~\ref{fig:binom_avg_results} is nearly indistinguishable until $p>1/2$ when this spillover starts occurring and the constraint causes the Push coverage to rise above $\gamma$.

Figure~\ref{fig:binom_avg_results_sym} shows that the favorable performance of the Push intervals in Figure~\ref{fig:binom_avg_results} is not just due to the Push intervals being allowed to be asymmetric. Figure~\ref{fig:binom_avg_results_sym} shows the coverage probability of the standard intervals of the same width, which are symmetric by design, compared with the symmetric Push intervals as described in Section~\ref{sec:binom.symm}. The symmetry constraint causes the coverage probability to rise for $p$ near the endpoints of $[0,1]$ and, in some cases, near the center of that interval, but it remains no smaller than $\gamma$.

A natural question is how much wider the standard intervals in Figures~\ref{fig:binom_avg_results} and \ref{fig:binom_avg_results_sym} would  have to be to maintain  minimum coverage probability $\gamma$ achieved by the Push intervals. Figure \ref{fig:binom_min_results} shows that the needed increase in width would be substantial, even compared with the slightly wider symmetric Push intervals. For example, from the figure we see that the level $\gamma=.7$ Push intervals have width $.255$, but the standard intervals do not achieve that minimum coverage probability until width near $.4$, a more than $35\%$ savings in width by using Push.  The savings at level $\gamma=.8$ is even more dramatic, which that standard intervals do not achieve until widths greater than $.57$ to the right of the figure, whereas the Push width is $.318$ giving a savings in width of more than $45\%$.

\section{Hypergeometric confidence intervals}\label{sec:hyper}

An approach similar to the binomial above can be taken to achieve optimal fixed-width intervals for the hypergeometric distribution. Let  $X\sim \mbox{Hyper}(\theta,n,N)$ denote a hypergeometric random variable representing the number of successes in a uniform draw, without replacement, of size~$n$ from a population of size~$N$ containing $\theta$ successes. Here we focus on confidence intervals for $\theta$ assuming $n$ and $N$ are known, although a similar approach can be taken for inference about $n$ assuming $\theta$ and $N$ are known. We take the variable in Theorem~\ref{thm:main} to be $Y=X+U$ with $U\sim\mbox{Unif}[-1/2,1/2]$ independent of $X$. 
The parameter space $\Theta=\{0,1,\ldots, N\}$ of  $\theta$ is already naturally discrete so to match this in \eqref{gen.grid.def} we take $\underline{\theta}=0$, $\overline{\theta}=N$, and $m=N$ so $\theta_k=k$. The desired width $w=r\in\{1,\ldots, N\}$ is a nonnegative integer  and the Push intervals~\eqref{gen.push.def} for $\theta$ are
\begin{equation*}
[L^*(y),R^*(y)]= \left[k, k+w\right]\qmq{for} y_k\le y<y_{k+1}.
\end{equation*} The  recursion for computing the $y_k$ is \eqref{gen.recur.disc} with $F_k$ the c.d.f.\ of $Y$ with parameter $\theta=k$ and the initial values in \eqref{gen.y.init} are
\begin{equation*}
 y_{-w}=y_{-w+1}=\ldots =  y_0=-1/2.
\end{equation*} The formulas \eqref{bin.Y.cdf}-\eqref{bin.Y.quant} given for the binomial hold for the hypergeometric when the $G_k,g_k$ there are replaced by the c.d.f.\ and density, respectively, of $X$ with $\theta=k$.

\subsection{Symmetric intervals} \label{sec:symm.hyper}
For obtaining symmetric intervals for the hypergeometric we make a similar recommendation for their modification as with the binomial in Section~\ref{sec:binom.symm}. Intervals~$[L(Y),R(Y)]$ for the hypergeometric parameter~$\theta$ are  symmetric if 
\begin{equation*}
[L(n-Y),R(n-Y)]=[N-R(Y),N-L(Y)],
\end{equation*}
which the  Push hypergeometric intervals in general  do not satisfy. Hypergeometric intervals~$[L(Y),R(Y)]$ can be replaced by the symmetric
\begin{equation}\label{UWIM.hyper}
[L_{sym}(Y), R_{sym}(Y)]= [L(Y) \wedge (N - R(n - Y)), R(Y) \vee (N - L(n-Y))],
\end{equation} which contains $[L(Y),R(Y)]$. For users requiring symmetric, $\gamma$ confidence intervals we suggest the following:
\begin{enumerate}
\item Find the smallest width $w^*\in\{1,\ldots, N\}$ for which the Push intervals~$[L^*(Y), R^*(Y)]$ exist for confidence level $\gamma$.
\item Obtain the symmetric intervals $[L_{sym}^*(Y), R_{sym}^*(Y)]$ given by \eqref{UWIM.hyper}.
\end{enumerate}
The resulting symmetric intervals~$[L_{sym}^*(Y), R_{sym}^*(Y)]$ will have confidence level $\gamma$ but may be wider than the optimal width~$w^*$  given by Theorem~\ref{thm:main} for the hypergeometric. We investigate the achieved widths of these intervals in our numerical simulations in the next section.

\subsection{Simulation examples and comparisons}\label{sec:hyper.sim}
In this section we compare the hypergeometric Push intervals, and their symmetric version, to the standard fixed-width~$w$ intervals which have endpoints~$XN/n \pm w/2$.  Although exact and even length-optimal intervals exist for the hypergeometric \cite[see][]{Wang15,Bartroff23},  the minimum coverage probability  of those intervals occurs for $X/n$ near $1/2$ where the intervals are widest, so their common fixed-width form is as above with $w$ taken to be their maximum width.

We consider sample sizes $n = 10$ and $n = 20$ with the population size fixed  at $N = 500$. Figures~\ref{fig:hyper_avg_results} and \ref{fig:hyper_avg_results_sym} contain the coverage probability of the Push intervals, their $[0,N]$-constrained and symmetric modifications, and the standard intervals, as functions of $\theta$. Figure~\ref{fig:hyper_min_results} shows the minimum coverage probability of these intervals as a function of their widths for $n = 10$.

The data for these figures was generated as follows. First, the minimum width  $w^*$ for which the Push intervals exist was computed for $\gamma=.9, .95$. Then the coverage probability of the intervals was estimated using $2,000$ realizations of $X$  for each $\theta = \theta_k=0,1,\ldots, N$.

\begin{figure}[h]
  \centering
  \includegraphics[width=0.75\textwidth]{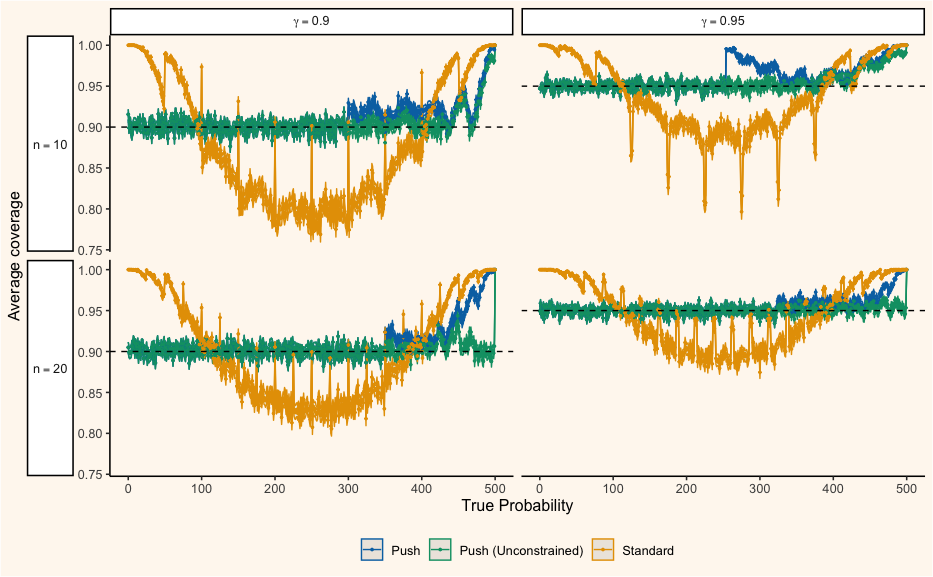}
  \caption{Coverage probability of intervals for hypergeometric~$\theta$ with sample sizes $n =10, 20$, population size~$N=500$, and and nominal confidence levels~$\gamma=.9, .95$.}
  \label{fig:hyper_avg_results}
\end{figure}

\begin{figure}[h]
  \centering
  \includegraphics[width=0.75\textwidth]{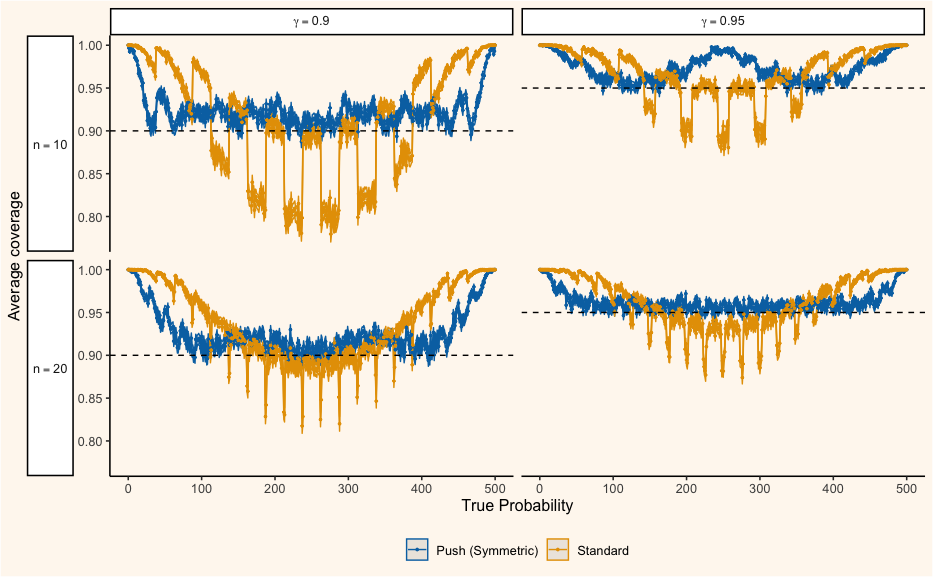}
  \caption{Coverage probability of symmetric intervals for hypergeometric~$\theta$ with sample sizes $n =10, 20$, population size~$N=500$, and and nominal confidence levels~$\gamma=.9, .95$.}
  \label{fig:hyper_avg_results_sym}
\end{figure}

\begin{figure}[h]
  \centering
  \includegraphics[width=0.75\textwidth]{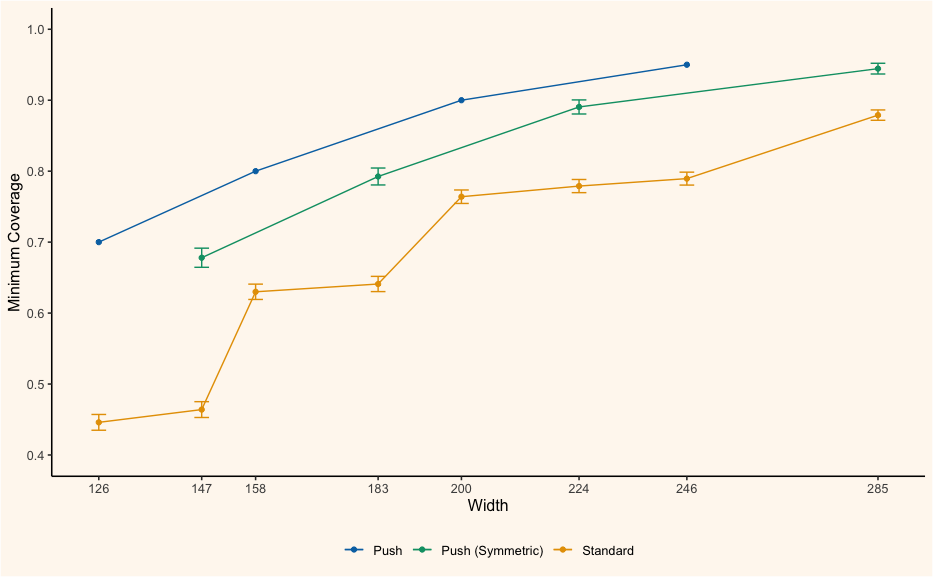}
  \caption{Minimum coverage probability over $\theta=0,1,\ldots, N=500$ of hypergeometric Push and standard intervals as a function of their achieved width.}
  \label{fig:hyper_min_results}
\end{figure}

In Figures~\ref{fig:hyper_avg_results} and \ref{fig:hyper_avg_results_sym} the coverage probability of the standard intervals falls well below the nominal level~$\gamma$ achieved by the Push intervals for $\theta$ near the center of $[0,N]$, and approaches $1$ for $\theta$ near the endpoints of this interval. The coverage probability of the unconstrained Push intervals in Figure~\ref{fig:hyper_avg_results} is centered at or just above $\gamma$, except for $\theta$ near the population size~$N$ where it rises due to the spillover of the right endpoints greater than $N$. The coverage probability of the $[0,N]$-constrained Push intervals in Figure~\ref{fig:hyper_avg_results} is nearly indistinguishable until $\theta$ near $N$ where the constraint causes the Push coverage to rise above $\gamma$.

Figure~\ref{fig:hyper_avg_results_sym} shows that the favorable performance of the Push intervals in Figure~\ref{fig:hyper_avg_results} is not just due to the method's asymmetry. Figure~\ref{fig:hyper_avg_results_sym} shows the same coverage probability of the standard intervals, which are symmetric by design, compared with the symmetric Push intervals as described in Section~\ref{sec:symm.hyper}. The symmetry constraint causes the coverage probability to rise for $\theta$ near $N$ and, in some cases, near the center of that interval $[0,N]$, but it remains no smaller than $\gamma$.

Figure \ref{fig:hyper_min_results} shows the minimum over $\theta$ of the three methods' coverage probabilities as a function of their widths. For Push these are the minimum widths~$w^*$ corresponding to confidence levels $\gamma=.7, .8, .9, .95$, and for the symmetric modification of Push these are the possibly slightly larger widths obtained by the steps in Section~\ref{sec:symm.hyper}.  For the standard intervals, the minimum coverage probability is shown at the union of these widths. Both Push and its symmetric modification show substantially higher minimum coverage probability, similar to the binomial case in Figure~\ref{fig:binom_min_results}.

\section{Confidence intervals for the normal mean}\label{sec:norm.mean}

A similar approach can be applied to obtain optimal fixed-width confidence intervals for the normal mean~$\theta$, based on observations with known variance, if one assumes a priori bounds  on $\theta$. Thus consider confidence intervals for $\theta$ based on $Y\sim N(\theta,\sigma^2)$, where $\sigma$ is known and $\theta$ is only known to lie in an interval $[\underline{\theta}, \overline{\theta}]$. An equivalent interpretation is that the coverage probability level is only required to hold for $\theta$ in this interval; see the citations in Section~\ref{sec:setup}.  Of course, $Y$ may represent the sample mean of i.i.d.\ normal observations with known variance, by appropriately adjusting $\sigma$ above.  Here like, similar to the binomial case above, we discretize the parameter space with the grid
$$\theta_k = \underline{\theta}+(\overline{\theta}- \underline{\theta})k/m, \quad k=0,1,\ldots,m,$$ and the desired width is represented as $w=(\overline{\theta}- \underline{\theta})r/m$ for integer~$r$. Since $Y$ is already continuous, we do not impose additional randomization on the statistic. The Push intervals for $\theta$  are given by \eqref{gen.push.def} and the  recursion for computing the $y_k$ is \eqref{gen.recur.cont} with  $F_k(y)=\Phi((y-\theta_k)/\sigma)$ the c.d.f.\ of $Y$ under $\theta=\theta_k$, and $\Phi$ is the standard normal c.d.f. The initial values in \eqref{gen.y.init} are
\begin{equation*}
 y_{-r}=y_{-r+1}=\ldots =  y_0=-\infty.
\end{equation*} 

\subsection{Numerical comparison}
The standard intervals for this setting are the well-known $z$ intervals, whose fixed-width version has endpoints~$Y\pm w/2$, and whose coverage probability can be computed exactly as
\begin{equation}\label{z.min.cov}
P_\theta(|Y-\theta|\le w/2) = 1-2\Phi(-w/(2\sigma)).
\end{equation}
The coverage probability and width of this standard method and the Push intervals are calculated in Table~\ref{tab:normal_results} in the scenario $\underline{\theta} = -10$, $\overline{\theta} = 10$, $\sigma = 1$, and $m = 10^5$. The minimal Push widths~$w^*$ achieving $\gamma = .7, .8, .9, .95$ were computed and then coverage of the corresponding $z$ interval of width~$w = w^*$ was calculated using \eqref{z.min.cov}.   Since these coverage probabilities are less than~$\gamma$ throughout the table, in the last column we also report what larger width would be needed for the $z$ interval to achieve coverage probability~$\gamma$, calculated using the quantiles of \eqref{z.min.cov}.

\begin{table}[!h]
\centering
\caption{Minimum coverage probability (cov.\ prob.) and widths of the Push and standard $z$ intervals for the normal mean with bounds $[\underline{\theta},\overline{\theta}] = [-10,10]$  and known variance~$\sigma^2 = 1$.}
\fontsize{12}{14}\selectfont
\begin{tabular}{cccc}
\toprule
\makecell[c]{Push intervals\\min.\ cov.\ prob.~$\gamma$} &
\makecell[c]{$z$ intervals\\cov.\ prob.} &
\makecell[c]{Width\\$w^*$} &
\makecell[c]{$z$ interval width\\for cov.\ prob.\ $\geq \gamma$} \\
\midrule
0.700 & 0.684 & 2.004 & 2.073\\
0.800 & 0.788 & 2.494 & 2.563\\
0.900 & 0.891 & 3.203 & 3.290\\
0.950 & 0.944 & 3.822 & 3.920\\
\bottomrule
\label{tab:normal_results}
\end{tabular}
\end{table}

In the table the Push intervals show small but consistent savings in width over the $z$ intervals in this scenario. This may be surprising due to the well-known optimality properties of $z$ intervals, however the bounds $\underline{\theta}, \overline{\theta}$ and asymmetry of the Push intervals are features that open the door to such improvement.

\section{Data analysis example}\label{sec:WHO.dat}

In this section we use the Push algorithm to analyze data from the Global Adult Tobacco Survey,  accessed via the World Health Organization (WHO) NCD Microdata Repository at \nolinkurl{https://extranet.who.int/ncdsmicrodata/index.php/home}. The WHO GATS Sample Design Manual \citeyearpar[GATS,][]{GATS_SampleDesign_2020} specifies that $95\%$ confidence intervals for national-level estimates have margin of error no greater than 3 percentage points, in other words that the confidence intervals for $p$ have width no larger than $.06$. Thus, this is a natural setting to consider fixed-width confidence intervals. The 2016 study \citeyearpar[see TISS,][]{GATSIndia2016_17} was conducted in India  and interviewed $74,037$ participants aged 15 years and older about their use of various forms of nicotine. Details regarding the experimental design and other standard protocols can be found in the WHO's GATS Manual~\citeyearpar[GATS,][]{GATS_SampleDesign_2020}.

As a target population we focus on respondents of age $18$ or older and on the possible usage of smoked tobacco; smokeless tobacco users were omitted. Ultimately, we retain $n = 56,026$ records with participants ranging in age from $18$ to $110$. We define four age group categories -- 18-24, 25-44, 45-64, and 65 and older --  and four education level categories: None (below high school level), high school diploma, undergraduate degree, and post-graduate degree. The response counts in the resulting  $16$ groups range from $124$ to $17,669$, and are given by the $n$ values in each of the $16$ cells of Table~\ref{tab:datagrid}.

To analyze this data with the Push and standard binomial intervals, we first calculate the smallest width for which the Push 95\% intervals (with $m = 10^5$) exist for the sample size in each age/education category. These widths and the 95\% Push interval for the proportion of adult tobacco smokers in India in that  category are  given in the second line of each table cell. The third line in each cell is the minimum width needed by the standard fixed-width binomial intervals~\eqref{bin.std.int} to maintain minimum coverage probability at least 95\% for that sample size~$n$, followed by the standard 95\% interval itself for proportion of adult tobacco smokers.

\begin{table}[!ht]
\centering
\small
\renewcommand{\arraystretch}{1.25}
\caption{Push and standard 95\% confidence intervals (CIs) applied to 2016 Global Adult Tobacco Survey data: Each table cell has sample size $n$ (first row), minimal width and CI of Push (second row) and standard (third row) intervals for proportion of smoked tobacco users by education (rows) and age (columns) categories.}
\label{tab:datagrid}
\begin{tabular}{lcccc}
\toprule
 & \textbf{18--24} & \textbf{25--44} & \textbf{45--64} & \textbf{65+} \\
\midrule
\textbf{None} & \makecell[c]{$n=667$ \\ .073 [.027, .100] \\ .076 [.000, .076]} & \makecell[c]{$n=5,482$ \\ .026 [.106, .132] \\ .027 [.100, .126]} & \makecell[c]{$n=5,078$ \\ .027 [.172, .199] \\ .028 [.167, .195]} & \makecell[c]{$n=2,269$ \\ .040 [.174, .214] \\ .041 [.167, .208]} \\
\midrule
\textbf{High School} & \makecell[c]{$n=6,810$ \\ .023 [.040, .063] \\ .024 [.032, .056]} & \makecell[c]{$n=17,669$ \\ .015 [.114, .129] \\ .015 [.111, .126]} & \makecell[c]{$n=8,165$ \\ .021 [.172, .193] \\ .022 [.168, .190]} & \makecell[c]{$n=2,032$ \\ .043 [.158, .200] \\ .044 [.149, .193]} \\
\midrule
\textbf{Undergraduate} & \makecell[c]{$n=1,158$ \\ .056 [.012, .068] \\ .058 [.000, .046]} & \makecell[c]{$n=3,116$ \\ .034 [.055, .089] \\ .035 [.044, .079]} & \makecell[c]{$n=1,033$ \\ .059 [.066, .126] \\ .061 [.049, .110]} & \makecell[c]{$n=229$ \\ .122 [.060, .182] \\ .131 [.022, .153]} \\
\midrule
\textbf{Post-graduate} & \makecell[c]{$n=225$ \\ .123 [.002, .125] \\ .133 [.000, .076]} & \makecell[c]{$n=1,595$ \\ .048 [.026, .074] \\ .050 [.008, .058]} & \makecell[c]{$n=554$ \\ .080 [.046, .126] \\ .085 [.019, .104]} & \makecell[c]{$n=124$ \\ .162 [.018, .180] \\ .177 [.000, .129]} \\
\bottomrule
\end{tabular}
\end{table}

The Push intervals show a small but consistent savings in interval width throughout the table.  The savings is most pronounced in cells with relatively smaller sample sizes, such as for post-graduate respondents age 65+ where Push provides a savings in maximum width of more than 8\%.  On the other hand, for high school respondents ages 25-44 with the largest sample size $n=17,669$, the difference in widths is smaller than the 3 decimal places reported in the table.

Table~\ref{tab:datagrid} shows that the maximum width~$.06$ prescribed  by the WHO GATS Sample Design Manual \citeyearpar[GATS,][]{GATS_SampleDesign_2020} is attained in some categories but not others, and in categories such as education ``None'' (below high school) and ages 18-24 where the Push width exceeds $.06$, there is no fixed-width 95\% interval that can achieve that width for the current sample size, by Theorem~\ref{thm:main}.  An interesting case is respondents with undergraduate education ages $45$-$64$, where the Push width of $.059$ achieves the WHO specification but the smallest standard interval width of $.061$ does not.


\section{Conclusions and discussion}\label{sec:Concl}
We have proposed the Push method for fixed-width confidence intervals for a single, bounded parameter, extending a method for the binomial due to Asparaouhov and Lorden to a wider class of distributions including the hypergeometric and the normal mean with known variance. The optimality of the method in Theorem~\ref{thm:main} for continuous parameters applies to distributions with the MLR property, and thus can be applied to any one-parameter exponential family \citep[][p.~67]{Lehmann05} whose parameter has bounded range. One example is the normal mean with known variance considered in Section~\ref{sec:norm.mean} where the method provides small but consistent savings over the venerable $z$ intervals.

 The failure of the standard binomial intervals seen in Section~\ref{sec:binom.sim} in comparison with the Push intervals is a well-known phenomenon  in the statistics literature \citep[e.g.,][]{Brown01,Brown02} for the similar,  non-fixed-width binomial intervals.  The minimum coverage probability tends to occur for $p$ near the center of $[0,1]$ and thus the standard fixed-width binomial intervals tend to have widths equal to the widest non-fixed-width intervals. Our results in Section~\ref{sec:binom.sim} show that this enlargement does not remedy the standard intervals' failure to maintain coverage probability. 

In analyzing  the WHO tobacco use data in Section~\ref{sec:WHO.dat}, the Push intervals provide most  savings relative to the standard binomial intervals in small or moderate sample sizes, and the difference decreases for large $n$. Although the standard intervals~\eqref{bin.std.int} do not explicitly rely on normal quantiles, their symmetric form inherently relies on a  normal approximation to the binomial, and the inefficiency  of this approximation diminishes as $n$ increases.




\bibliographystyle{apalike}

\def\cprime{$'$}

\end{document}